\newcommand{\De}{\mathfrak{D}}
\newcommand{\Ex}{\mathcal{E}}
\newcommand{\I}{\mathtt{I}}
\newcommand{\Ix}{\mathtt{I}x}
\newcommand{\non}{\neg}
\theoremstyle{definition}
\newtheorem{defn}{\sc Definition}[section]
\newtheorem{thm}{\sc Theorem}[section]
\newtheorem{lemma}{\sc Lemma}[section]
\title{A Binary Quantifier for Definite Descriptions in Nelsonian Free Logic}
\author{
Yaroslav Petrukhin
\institute{Univeristy of Lodz\\
\L{}\'{o}d\'{z}, Poland}
\email{\quad yaroslav.petrukhin@gmail.com}
}
\begin{document}
\maketitle

\begin{abstract}
The method K\"{u}rbis used to formalise definite descriptions with a binary quantifier $ \I $, such that $ \I x[F,G]$ indicates `the $ F $ is $ G $', is examined and improved upon in this work. K\"{u}rbis first looked at $ \I $ in intuitionistic logic and its negative free form. It is well-known that intuitionistic reasoning approaches truth constructively. We also want to approach falsehood constructively, in Nelson's footsteps. Within the context of Nelson's paraconsistent logic \textbf{N4} and its negative free variant, we examine $ \I $. We offer an embedding function from Nelson's (free) logic into intuitionistic (free) logic, as well as a natural deduction system for Nelson's (free) logic supplied with $ \I $ and Kripke style semantics for it. Our method not only yields constructive falsehood, but also provides an alternate resolution to an issue pertaining to Russell's interpretation of definite descriptions. This comprehension might result in paradoxes. Free logic, which is often used to solve this issue, is insufficiently powerful to produce contradictions. Instead, we employ paraconsistent logic, which is made to function in the presence of contradicting data without devaluing the process of reasoning. 
\end{abstract}

\section{Introduction}
K\"{u}rbis \cite{KurbisNegFree19a} developed a theory of definite descriptions formalised with a binary quantifier $ \I $ such that $ \I x[F,G]$ means `the $ F $ is $ G $'. This theory is based on intuitionistic first-order logic with identity and its negative free version. Later on, K\"{u}rbis presented another version based on intuitionistic positive free logic \cite{KurbisPosFree21a}. The version presented in \cite{KurbisNegFree19a} is a Russellian one; $ \I x[F,G]$ is equivalent to Russell's definition of a definite description, that is, $ \exists x(F \wedge \forall y(F^x_y\rightarrow y=x)\wedge G)$. However, Russell does not use a binary quantifier, but a term-forming iota-operator $ \iota $: `the $ F $ is $ G $' in Russell's notation is written as $ G(\iota x F (x)) $. As noticed in \cite{KurbisPosFree21a}, one of the problems with this notation is the meaning of $ \neg G(\iota x F (x)) $: it might be understood as `the $ F $ is not
$ G $' or as `that it is not the case that the $ F $ is $ G $'. The use of a binary quantifier allows K\"{u}rbis to escape from this ambiguity. So `the $ F $ is not
$ G $' is formalised as $ \I x[F,\neg G]$ and `that it is not the case that the $ F $ is $ G $' as $ \neg\I x[F, G]$.
 
Generally speaking, the Russellian method might lead to contradiction. There are several ways to deal with that: require $ G $ in $ G(\iota x F(x)) $ to be atomic, introduce scope distinctions, use free logic, use $ \lambda $-calculus, use paraconsistent logic. In our opinion, the first approach is too restrictive, the second approach might be too clumsy. Free logics lack the deductive strength necessary to deduce a contradiction. Free logic is quite often employed in the study of definite descriptions and is a good solution. The use of $ \lambda $-calculus works fine as well, although makes the language more complicated. We would like to examine the last option, the use of paraconsistent logic, which is a rather rarely explored option.  Contradiction ceases to be an issue in a paraconsistent logic since it prevents us from drawing all the possible conclusions. Therefore, we may answer this problem without employing free logic or $ \lambda $-calculus by using Nelson's logic \textbf{N4} \cite{AlmukdadNelson} as the foundation for the research of $ \I $.
 
 Intuitionistic logic is known for its constructive view of truth. Nelson's logic \textbf{N4} \cite{AlmukdadNelson} (as well as its non-paraconsistent version \textbf{N3} \cite{Nelson}) makes falsity constructive too. One of the aims of this paper is to formulate K\"{u}rbis' approach to definite descriptions on the basis of logic with both truth and falsity being constructive. So we study $ \I $ in Nelson's \textbf{N4}-first order logic and in its negative free version.
 
 
 To sum up, our motivation is to avoid negative consequences of contradictions in Russellian theory of definite descriptions by the use of paraconsistent logic and to make this theory constructive, in such a way that both truth and falsity are constructive. 
 The choice of \textbf{N4} allows to reach both aims.
 
 K\"{u}rbis' \cite{KurbisNegFree19a} approach is proof-theoretic: he uses Tennant's \cite{Tennant78} natural deduction system for intuitionistic first-order logic with identity as well as Tennant's natural deduction system for intuitionistic negative free logic with identity and extends them by the rules for $ \I $.\footnote{Actually, Tennant has his own approach to definite descriptions \cite{Tennant78,Tennant04} and the rules for $ \iota $; the paper \cite{KurbisNegFree19b} compares K\"{u}rbis' and Tennant's methods.} In keeping with this, we also present our results in the form of natural deduction systems. But unlike K\"{u}rbis, we also use semantics in our work. Additionally, we establish the following embedding theorems: both syntactically and semantically Nelson's (negative free) logic is embedded into intuitionistic (negative free) logic. As a consequence, we obtain the completeness theorem. Instead of using our embedding processes for $ \I $, we utilise its definition via quantifiers to derive the sufficient truth and falsity conditions for $ \I $.
 
The structure of the paper is as follows. In Section \ref{ND}, we formulate natural deduction systems for the logics in question. In Section \ref{SM}, we formulate the semantics for these natural deduction systems. In Section \ref{EMB}, we formulate an embedding function and prove embedding theorems. Section \ref{CON} makes concluding remarks.
\section{Natural deduction calculi}\label{ND}
Let us fix a first-order language $ L^\neg $ with the following symbols: variables $ v_1,v_2,\ldots $; constants: $ k, k_1, \ldots $; for every natural number $ n > 0 $, $ n $-place predicate letters $ P_0,P_1,P_2,\ldots $; identity predicate $ = $; propositional connectives $ \neg,\wedge,\vee,\rightarrow $; quantifiers: $ \forall,\exists $; comma, left and right parenthesis. In the case of free logic, we use the symbol $ \Ex $ for the existence predicate. In the metalanguage, we write $ x, y, z $ for arbitrary variables, $ a, b, c $ for arbitrary constants, $ t,t_1,t_2,\ldots $ for terms, $ A,B,C,F,G $ for formulas. The notions of a term and a formula of the language $ L^\neg $ are defined in a standard way. Let $ L^\neg_\I $ be an extension of $ L^\neg $ by a binary quantifier $ \I $. Let $ L^\bot $ ($ L^\bot_\I $) be the language obtained from $ L^\neg $ ($ L^\neg_\I $) by the replacement $ \neg $ with constant falsum $ \bot $. Following K\"{u}rbis \cite{KurbisNegFree19a}, we use the following notation:
\begin{quote}
``I will use $ A^x_t $
to denote the result of replacing all
free occurrences of the variable $ x $ in the formula $ A $ by the term $ t $ or the
result of substituting $ t $ for the free variable $ x $ in $ A $. $ t $ is free for $ x $ in $ A $
means that no (free) occurrences of a variable in $ t $ become bound by a
quantifier in $ A $ after substitution. In using the notation $ A^x_t $
I assume that
$ t $ is free for $ x $ in $ A $ or that the bound variables of $ A $ have been renamed to
allow for substitution without `clashes' of variables, but for clarity I also
often mention the condition that $ t $ is free for $ x $ in $ A $ explicitly. I also use
the notation $ Ax $ to indicate that $ x $ is free in $ A $, and $ At $ for the result of
substituting $ t $ for $ x $ in $ A $.'' \cite[p. 82]{KurbisNegFree19a}
\end{quote}

In what follows, we write \textbf{N4} for a first-order version with identity of Nelson's paraconsistent logic from \cite{AlmukdadNelson}, and $ \bf N4^{NF}$ for its negative free version; their extensions by $ \I $ we denote as $ \bf N4_\I$ and $ \bf N4^{NF}_\I$. We write \textbf{Int} for first-order intuitionsitic logic with identity, and $ \bf Int^{NF}$ for its negative free version; similarly, $ \bf Int_\I$ and $ \bf Int^{NF}_\I$ are extensions of $ \bf Int$ and $ \bf Int^{NF}$ by $ \I $.

Based on Prawitz's research \cite{Prawitz} as well as K\"{u}rbis' investigation \cite{KurbisNegFree19a} of the rules for $ \I $, we formulate the following Gentzen-Prawitz-style  natural deduction systems for $ \bf N4$, $ \bf N4^{NF}$, $ \bf N4_\I$, and $ \bf N4^{NF}_\I$. The difference between free and non-free logics lies in the rules for quantifiers, including $ \I $, identity (the existence predicate $ \Ex $ is used in the case of free logics), and the usage of special rules for predicates in the case of free logics. 

The rules for non-negated propositional connectives are as follows:
 \begin{center}
  ($\vee I_{1} $) $ \dfrac{ A}{ A \vee  B} $ \quad	
  ($\vee I_{2} $) $ \dfrac{ B}{ A \vee  B} $ \quad
 $(\vee E)^{i,j}$ \, $\dfrac{\begin{matrix} 	
  & [A]^i & [B]^j \\
 & \De_1 & \De_2 \\ 	
  A \vee  B &   C &   C \\\end{matrix}}	
{C}
 $  \quad 	
 $ (\rightarrow I)^i$ \, $\dfrac{\begin{matrix} 	
 [A]^i  \\	
 \De \\
   B  \\\end{matrix}}	
 {A \rightarrow  B}
  $  \quad 
\end{center}
\begin{center}
  ($ \rightarrow E$) $ \dfrac{A\rightarrow B\quad A}{B} $\quad
  ($\wedge I$) $ \dfrac{A \quad B}{A \wedge B} $ \quad			
  ($\wedge E_{1} $) $ \dfrac{ A \wedge  B}{ A} $ \quad	
  ($\wedge E_{2} $) $ \dfrac{ A \wedge  B}{ B} $\quad		 
 \end{center}
 
 The rules for negated propositional connectives as follows:
\begin{center}
 ($\neg \neg I$) $ \dfrac{A}{\neg \neg A} $ \quad
 ($\neg \neg E$) $ \dfrac{\neg \neg A}{A} $ \quad
 ($ \neg{\rightarrow} I $) $ \dfrac{A\quad\neg B}{\neg(A\rightarrow B)} $\quad
  ($ \neg{\rightarrow} E_1 $) $ \dfrac{\neg(A\rightarrow B)}{A} $\quad
  ($ \neg{\rightarrow} E_2 $) $ 
  \dfrac{\neg(A\rightarrow B)}{\neg B} $\quad 
   \end{center} 
   \begin{center}   
($\neg\!\vee\! I$) $ \dfrac{\neg  A\quad \neg  B}{\neg ( A \vee  B)} $ \quad			
($\neg\! \vee\! E_1$) $ \dfrac{\neg ( A \vee  B)}{\neg  A} $ \quad
($\neg\! \vee\! E_2$) $ \dfrac{\neg ( A \vee  B)}{\neg  B} $ \quad
\end{center} 
\begin{center} 
($\neg\!\wedge\! I_1$) $ \dfrac{\neg  A}{\neg ( A \wedge  B)} $ \quad	
($\neg\!\wedge\! I_2$) $ \dfrac{\neg  B}{\neg ( A \wedge  B)} $ \quad	
$(\neg\! \wedge\! E)^{i,j}$ \, $\dfrac{\begin{matrix} 	
  & [\neg A]^i & [\neg B]^j \\
 & \De_1 & \De_2 \\ 	
\neg ( A \wedge  B) &   C &   C \\\end{matrix}}	
{C}
 $  \quad 
\end{center}

The rules for quantifiers are as follows (we give them in both ordinary and free versions (the rules for an ordinary version contain $ ^\prime $ in their names); the proviso below is given in the form suitable for free version, but can be straightforwardly adapted for the ordinary one):	

\begin{center}
$ (\forall I)^i $ \, $\dfrac{\begin{matrix} 	
[\Ex y]^i  \\	
\De \\
  A^x_y  \\\end{matrix}}	
{\forall x A}
 $  \; 
 $ ( \forall E) $ $ \dfrac{\forall x A\quad \Ex t}{A^x_t} $\;
 $ (\neg\forall I) $ $ \dfrac{\neg A^x_t\quad \Ex t}{\neg\forall x A} $\;
 $ (\neg\forall E)^{i} $ \, $\dfrac{\begin{matrix} 	
   & [\neg A^x_y]^i, [\Ex y]^i  \\
  & \De \\  	
\neg\forall x A &   C  \\\end{matrix}}	
 {C}
  $  \;
\end{center}
\begin{center}
$ (\forall I^\prime) $ \, $\dfrac{\begin{matrix} 		
  A^x_y  \\\end{matrix}}	
{\forall x A}
 $  \; 
 $ ( \forall E^\prime) $ $ \dfrac{\forall x A}{A^x_t} $\;
 $ (\neg\forall I^\prime) $ $ \dfrac{\neg A^x_t}{\neg\forall x A} $\;
 $ (\neg\forall E^\prime)^{i} $ \, $\dfrac{\begin{matrix} 	
   & [\neg A^x_y]^i  \\
  & \De \\  	
\neg\forall x A &   C  \\\end{matrix}}	
 {C}
  $  \;
\end{center}

where in $ (\forall I) $, $ y $ does not occur free in any undischarged assumptions of $ \De $
except $ \Ex y $, and either $ y $ is the same as $ x $ or $ y $ is not free in $ A $; in $ (\forall E) $, $ t $ is free for $ x $ in $ A $; 
in $ (\neg\forall I) $, $ t $ is free for $ x $ in $ A $; and in $ (\neg\forall E) $, $ y $ is not free in $ C $ nor any
undischarged assumptions of $ \De $, except $ \neg A^x_y $ and $ \Ex y $, and either $ y $ is the
same as $ x $ or it is not free in $ A $.

\begin{center}
$ (\exists I) $ $ \dfrac{A^x_t\quad \Ex t}{\exists x A} $\quad
$ (\exists E)^{i} $ \, $\dfrac{\begin{matrix} 	
  & [A^x_y]^i, [\Ex y]^i  \\
 & \De \\  	
\exists x A &   C  \\\end{matrix}}	
{C}
 $  \quad
 $ (\neg\exists I)^i $ \, $\dfrac{\begin{matrix} 	
 [\Ex y]^i  \\	
 \De \\
  \neg A^x_y  \\\end{matrix}}	
 {\neg\exists x A}
  $  \quad 
  $ (\neg\exists E) $ $ \dfrac{\neg\exists x A\quad \Ex t}{\neg A^x_t} $\quad
\end{center}
\begin{center}
$ (\exists I^\prime) $ $ \dfrac{A^x_t}{\exists x A} $\quad
$ (\exists E^\prime)^{i} $ \, $\dfrac{\begin{matrix} 	
  & [A^x_y]^i  \\
 & \De \\  	
\exists x A &   C  \\\end{matrix}}	
{C}
 $  \quad
 $ (\neg\exists I^\prime) $ \, $\dfrac{\begin{matrix} 	
  \neg A^x_y  \\\end{matrix}}	
 {\neg\exists x A}
  $  \quad 
  $ (\neg\exists E^\prime) $ $ \dfrac{\neg\exists x A}{\neg A^x_t} $\quad
\end{center}

where in $ (\exists I) $, $ t $ is free for $ x $ in $ A $; and in $ (\exists E) $, $ y $ is not free in $ C $ nor any
undischarged assumptions of $ \De $, except $ A^x_y $ and $ \Ex y $, and either $ y $ is the
same as $ x $ or it is not free in $ A $;  $ (\neg\exists I) $, $ y $ does not occur free in any undischarged assumptions of $ \De $
except $ \Ex y $, and either $ y $ is the same as $ x $ or $ y $ is not free in $ A $; in $ (\neg\exists E) $, $ t $ is free for $ x $ in $ A $.

The rules for identity are given below (both in the ordinary and free versions), where $ A $ is an atomic formula or its negation (the rule $ (=E) $ is suitable for both ordinary and free versions; while $ (=I^\prime) $ is used in an ordinary version and $ (=I) $ in a free one):
\begin{center}
$ (=I) $ $ \dfrac{\Ex t}{t=t} $\quad
$ (=I^\prime) $ $ \dfrac{}{t=t} $\quad
$ (=E) $ $ \dfrac{t_1=t_2\quad A^x_{t_1}}{A^x_{t_2}} $\quad
\end{center}

The special rules for free logic regarding predicates ($ P $ stands for an arbitrary predicate, including $ = $):
\begin{center}
(PD) $ \dfrac{P(t_1,\ldots,t_n)}{\Ex t_i} $\quad
($ \neg $PD) $ \dfrac{\neg P(t_1,\ldots,t_n)}{\Ex t_i} $\quad
\end{center}


The rules for a binary quantifier representation of definite descriptions (both ordinary and free versions):
\begin{center}
$ (\I I)^{i} $ \, $\dfrac{\begin{matrix} 	
   & &  & [F^x_y]^i [\Ex y]^i \\
  & &  & \De \\ 	
F^x_t &   G^x_t & \Ex t & y=t \\\end{matrix}}	
 {\I x[F,G]}
  $  \quad  
$ (\I I^\prime)^{i} $ \, $\dfrac{\begin{matrix} 	
   & &   [F^x_y]^i  \\
  & &   \De \\ 	
F^x_t &   G^x_t &  y=t \\\end{matrix}}	
 {\I x[F,G]}
  $  \quad 
  \end{center}
  
where $ t $ is free for $ x $ in $ F $ and in $ G $, and $ y $ is different from $ x $, not free in $ t $ and does not occur free in any undischarged assumptions in $ \De $ except $ F^x_y $ and $ \Ex y $.  

  \begin{center}
$ (\neg\I E)^{i,j,k} $ \, $\dfrac{\begin{matrix} 	
   & [\neg F^x_t]^{i} & [\neg G^x_t]^{j} &  [F^x_y]^{k} [\Ex y]^{k} [\neg y=t]^{k} \\
  & \De_1 & \De_2 & \De_3  \\ 	
\neg \Ix[F,G]  & C & C & C   \\\end{matrix}}	
 {C}
  $  \quad    
\end{center}
  \begin{center}
$ (\neg\I E^\prime)^{i,j,k} $ \, $\dfrac{\begin{matrix} 	
   & [\neg F^x_t]^{i} & [\neg G^x_t]^{j} &  [F^x_y]^{k}  [\neg y=t]^{k} \\
  & \De_1 & \De_2  & \De_3  \\ 	
\neg \Ix[F,G]  & C & C & C   \\\end{matrix}}	
 {C}
  $  \quad    
\end{center}

where $ t $ is free for $ x $ in $ F $ and in $ G $, and $ y $ is different from $ x $, not free in $ t $ and does not occur free in any undischarged assumptions in $ \De_4 $ except $ F^x_y $ and $ \Ex y $. Free version:
\begin{center}
{\small $ (\I E_1)^{i} $  $\dfrac{\begin{matrix} 	
   &  [F^x_y]^i [G^x_y]^i [\Ex y]^i \\
  &  \De \\ 	
\Ix[F,G] & C \\\end{matrix}}	
 {C}
  $  \; 
$ (\neg \I I_1) $  $ \dfrac{\neg F^x_y}{\neg\Ix[F,G]} $\;
$ (\neg \I I_2) $  $ \dfrac{\neg G^x_y}{\neg\Ix[F,G]} $\;}
\end{center}

Ordinary version:

\begin{center}
{\small $ (\I E_1^\prime)^{i} $  $\dfrac{\begin{matrix} 	
   &  [F^x_y]^i [G^x_y]^i \\
  &  \De \\ 	
\Ix[F,G] & C \\\end{matrix}}	
 {C}
  $  \; 
$ (\neg \I I_1) $  $ \dfrac{\neg F^x_y}{\neg\Ix[F,G]} $\;
$ (\neg \I I_2) $  $ \dfrac{\neg G^x_y}{\neg\Ix[F,G]} $\;}
\end{center}

where is $ y $ not free in $ C $ nor any undischarged assumptions it depends on
except $ F^x_y $, $ G^x_y $, and $ \Ex y $, and either $ y $ is the same as $ x $ or it is not free in $ F $ nor in $ G $.

\begin{center}
{\small $ (\I E_2) $ $ \dfrac{\Ix[F,G]\quad \Ex t_1 \quad \Ex t_2 \quad F^x_{t_1}\quad F^x_{t_2}}{t_1=t_2} $\quad
$ (\neg\I I_3) $ $ \dfrac{\neg t_1=t_2\quad \Ex t_1\quad \Ex t_2 \quad F^x_{t_1}\quad F^x_{t_2}}{\neg\Ix[F,G]} $\quad}
\end{center}
\begin{center}
{\small $ (\I E_2^\prime) $ $ \dfrac{\Ix[F,G]\quad F^x_{t_1}\quad F^x_{t_2}}{t_1=t_2} $\quad
$ (\neg\I I_3^\prime) $ $ \dfrac{\neg t_1=t_2\quad F^x_{t_1}\quad F^x_{t_2}}{\neg\Ix[F,G]} $\quad}
\end{center}
where $ t_1 $ and $ t_2 $ are free for $ x $ in $ F $.



Natural deduction systems for $ \bf Int$, $ \bf Int^{NF}$, $ \bf Int_\I$, and $ \bf Int^{NF}_\I$ can be obtained from natural deduction systems for  $ \bf N4$, $ \bf N4^{NF}$, $ \bf N4_\I$, and $ \bf N4^{NF}_\I$ by implementing the following changes: in the rule ($ =E $), $ A $ stands just for atomics formulas (not their negations), all negated rules for connectives, quantifiers, including $ \I $, and predicates have to replaced with the following rule

\begin{center}
$ (\bot E) $ $ \dfrac{\bot}{B} $
\end{center}

As follows from \cite[p. 85]{KurbisNegFree19a}, $ \I x[F,G] $ and $ \exists x(F \wedge \forall y(F^x_y\rightarrow y=x)\wedge G)$ are interderivable in intuitionstic logic. Since in this proof only non-negated rules are used, it is a proof in Nelson logic as well. Thus, $ \I x[F,G] $ and $ \exists x(F \wedge \forall y(F^x_y\rightarrow y=x)\wedge G)$ are interderivable in Nelson's logic as well. As follows from \cite[p. 90--91]{KurbisNegFree19a}, $ \I x[F,G] $ and $ \exists x(F \wedge \forall y(F^x_y\rightarrow y=x)\wedge G)$ are interderivable in intuitionstic negative free logic as well. Again, the same proof can be used in the case of Nelson's logic, since only non-negated rules are involved, so we can conclude that $ \I x[F,G] $ and $ \exists x(F \wedge \forall y(F^x_y\rightarrow y=x)\wedge G)$ are interderivable in Nelson's free logic.

However, in the case of Nelson's logic a natural question arises: what about negation of $ \I x[F,G] $? We can show that $ \neg\I x[F,G] $ and $ \forall x(\neg F \vee \exists y (F^x_y \wedge \neg y=x)\vee \neg G) $ are interderivable in Nelson's logic. Let us denote $ \neg F \vee \exists y (F^x_y \wedge \neg y=x)\vee \neg G $ via $ \mathfrak{F} $.

1. $\neg\I x[F,G] \vdash_{\bf N4} \forall x(\neg F \vee \exists y (F^x_y \wedge \neg y=x)\vee \neg G)$ (where double line means a double application of a disjunction introduction rule):
\begin{center}
\EnableBpAbbreviations
\AXC{$ \neg\I x[F,G] $}
\AXC{$ [\neg F]^1$}\doubleLine
\UIC{$ \mathfrak{F} $}
\AXC{$ [F^x_y]^2 $}
\AXC{$ [\neg y=x]^3 $}
\RL{$ (\wedge I) $}
\BIC{$ F^x_y \wedge \neg y=x $}
\RL{$ (\exists I) $}
\UIC{$ \exists y (F^x_y \wedge \neg y=x)$}\doubleLine
\UIC{$ \mathfrak{F} $}
\AXC{$ [\neg G]^4$}\doubleLine
\UIC{$ \mathfrak{F} $}
\RL{$(\neg\I E) ^{1,2,3,4} $}
\QuaternaryInfC{$ \mathfrak{F} $}
\RL{$ (\forall I^\prime) $}
\UIC{$ \forall x\mathfrak{F}$}
\DisplayProof
\end{center}

2. $\forall x(\neg F \vee \exists y (F^x_y \wedge \neg y=x)\vee \neg G)\vdash_{\bf N4}\neg\I x[F,G]$. 
 Let us denote $ F^x_y \wedge\neg y=x $ via $ \mathfrak{G}^x_y $.
\begin{center}
{\scriptsize 
\EnableBpAbbreviations
\AXC{$ \forall x\mathfrak{F} $}
\RL{$ (\forall E^\prime) $}
\UIC{$ \mathfrak{F} $}
\AXC{$ [\neg F]^1 $}\doubleLine
\UIC{$ \neg\I x[F,G] $}
\AXC{$ [\exists y (\mathfrak{G})]^2 $}
\AXC{$ [\exists y (\mathfrak{G})]^2 $}
\AXC{$ [\mathfrak{G}^x_y]^3 $}
\UIC{$ \neg y=x$}
\AXC{$ [\mathfrak{G}^x_x]^4 $}
\UIC{$F^x_x$}
\AXC{$ [\mathfrak{G}^x_y]^3 $}
\RL{$ (\wedge E) $}
\UIC{$F^x_y$}
\RL{$ (\neg\I I_3) $}
\TIC{$ \neg\I x[F,G] $}
\RL{$ (\exists E^\prime)^{4} $}
\BIC{$ \neg\I x[F,G] $}
\RL{$ (\exists E^\prime)^{3} $}
\BIC{$ \neg\I x[F,G] $}
\AXC{$ [\neg G]^7 $}\doubleLine
\UIC{$ \neg\I x[F,G] $}
\LL{$(\vee E) ^{1,2,7} $}
\QuaternaryInfC{$ \neg\I x[F,G] $}
\DisplayProof}
\end{center}

In the case of Nelson's free logic we have the following deductions. 

1. $\neg\I x[F,G]\vdash_{\bf N4}\forall x(\neg F \vee \exists y (F^x_y \wedge \neg y=x)\vee \neg G)$.

\begin{center}
\EnableBpAbbreviations
\AXC{$ \neg\I x[F,G] $}
\AXC{$ [\neg F]^1$}\doubleLine
\UIC{$ \mathfrak{F} $}
\AXC{$ [F^x_y]^2 $}
\AXC{$ [\neg y=x]^3 $}
\RL{$ (\wedge I) $}
\BIC{$ F^x_y \wedge \neg y=x $}
\AXC{$ [\Ex y]^5$}
\RL{$ (\exists I) $}
\BIC{$ \exists y (F^x_y \wedge \neg y=x)$}\doubleLine
\UIC{$ \mathfrak{F} $}
\AXC{$ [\neg G]^4$}\doubleLine
\UIC{$ \mathfrak{F} $}
\RL{$(\neg \I E) ^{1,2,3,4} $}
\QuaternaryInfC{$ \mathfrak{F} $}
\RL{$ (\forall I)^{5} $}
\UIC{$ \forall x\mathfrak{F}$}
\DisplayProof
\end{center}

2. $\Ex y,\forall x(\neg F \vee \exists y (F^x_y \wedge \neg y=x)\vee \neg G)\vdash_{\bf N4}\neg\I x[F,G]$. Let us denote $ F^x_y \wedge\neg y=x $ via $ \mathfrak{G}^x_y $.
\begin{center}
{\tiny 
\EnableBpAbbreviations
\AXC{$ \forall x\mathfrak{F} $}
\AXC{$ \Ex y$}
\BIC{$ \mathfrak{F} $}
\AXC{$ [\neg F]^1 $}
\UIC{$ \neg\I x[F,G] $}
\AXC{$ [\exists y (\mathfrak{G})]^2 $}
\AXC{$ [\exists y (\mathfrak{G})]^2 $}
\AXC{$ [\mathfrak{G}^x_y]^3 $}
\UIC{$ \neg y=x$}
\AXC{$ [\Ex x]^4 $}
\AXC{$ [\Ex y]^5 $}
\AXC{$ [\mathfrak{G}^x_x]^6 $}
\UIC{$F^x_x$}
\AXC{$ [\mathfrak{G}^x_y]^3 $}
\UIC{$F^x_y$}
\RL{$ (\neg\I I_3) $}
\QuinaryInfC{$ \neg\I x[F,G] $}
\RL{$ ^{4,6} $}
\BIC{$ \neg\I x[F,G] $}
\RL{$ ^{3,5} $}
\BIC{$ \neg\I x[F,G] $}
\AXC{$ [\neg G]^7 $}
\UIC{$ \neg\I x[F,G] $}
\LL{$ ^{1,2,7} $}
\QuaternaryInfC{$ \neg\I x[F,G] $}
\DisplayProof}
\end{center}

\section{Semantics}\label{SM}
Let us describe semantics for intuitionistic negative free logic with identity as well as intuitionistic first-order logic with identity.
We follow Priest's \cite{Priest} presentation of semantics for intuitionistic first-order logic with identity. 

\begin{defn}[Intuitionisitic negative free structure]
An intuitionistic negative free structure $ \mathfrak{I} $ is a seventuple $ \langle W,R,H,D,E,J, \varphi\rangle  $, where $ W $ is the non-empty set of possible worlds, $ R $ is a binary reflexive and transitive relation on $ W $,  $ H $ is a non-empty set of objects, $ D $ is the non-empty
domain of quantification, which members are functions from $ W $ to $ H $ such that for any $ d\in D$ and $ w\in W$ we have $ d(w)\in H$ (in what follows, we write $ |d|_w $ for $ d(w) $), $ E$ is the (possibly, empty) set of all existent objects such that $ E\subseteq D$, $ J =\{|d|_w \in H\mid d\in E \}$,
$ \varphi $ is a function such that it maps $ w\in W$ to a subset of $ D $, $ \varphi(w)\subseteq D$, which we denote as $ D_w $, and satisfies the following conditions, for any $ w\in W$:
\begin{itemize}
\item $ \varphi_w(\Ex)=J $,
\item if $ c $ is a constant, then $ \varphi(c) \in D_w $,
\item if $ P $ is an $ n $-place predicate, then $  \varphi_w(P) \subseteq J^n $,
\item $ \varphi_w(=)=\{\langle t,t\rangle \mid t\in J\} $,
\item if $ wRw^\prime $, then $ \varphi_w(P)\subseteq\varphi_{w^\prime}(P) $, for any $ n $-place predicate predicate $ P $, including $ = $,
\item if $ wRw^\prime $, then $ D_w\subseteq D_{w^\prime} $.
\item if $ \langle d_1,\ldots,d_n\rangle\in\varphi_w(P) $, then $ d_1\in\varphi_w(\Ex),\ldots,d_n\in\varphi_w(\Ex)$.
\end{itemize}
\end{defn}

\begin{defn}[Intuitionistic structure]
An intuitionistic structure is an intuitionistic negative free structure $ \mathfrak{I} = \langle W,R,H,D,E,J, \varphi\rangle  $ such that $ D=E $, and hence $ H=J $; and $ \varphi_w(\Ex)=D $.
\end{defn}


Following Priest \cite{Priest}, for all $ d \in D $, we add a constant to the language, $ k_d $, such that $ \varphi(k_d) = d $.
\begin{defn}[Intuitionistic (negative free) semantics]\label{IntSemantic}
An intuitionistic (ne\-ga\-ti\-ve free) valuation $ \Vdash^I $ on a model $ \mathfrak{I} = \langle W,R,H,D,E,J, \varphi\rangle  $ is defined as follows, for any $ w\in W$:
\begin{itemize}\itemsep=0pt
\item $ \mathfrak{I},w\Vdash^I P(t_1,\ldots,t_n) $ iff $ \langle |\varphi(t_1)|_w,\ldots,|\varphi(t_n)|_w\rangle\in \varphi_w(P^n) $,
\item $ \mathfrak{I},w\nVdash^I \bot$,
\item $ \mathfrak{I},w\Vdash^I A\rightarrow B$ iff $ \forall w^\prime\in W(R(w,w^\prime)\text{~implies~}(\mathfrak{I},w^\prime\Vdash^I A\text{~implies~}\mathfrak{I},w^\prime\Vdash^I B))$,
\item $ \mathfrak{I},w\Vdash^I A\wedge B$ iff $ \mathfrak{I},w\Vdash^I A $ and $ \mathfrak{I},w\Vdash^I B$,
\item $ \mathfrak{I},w\Vdash^I A\vee B$ iff $ \mathfrak{I},w\Vdash^I A $ or $ \mathfrak{I},w\Vdash^I B$,
\item $ \mathfrak{I},w\Vdash^I\forall x A $ iff $ \forall w^\prime(R(w,w^\prime)\text{~implies~} \forall d\in E_{w^\prime}, \mathfrak{I},w^\prime\Vdash^IA^x_{k_d}) $
\item $ \mathfrak{I},w\Vdash^I\exists x A $ iff $ \exists d\in E_{w}, \mathfrak{I},w\Vdash^IA^x_{k_d} $.
\end{itemize}
Using the fact that $ \I x[F,G] $ and $ \exists x(F \wedge \forall y(F^x_y\rightarrow y=x)\wedge G)$ are interderivable, we can propose the following semantic condition for $ \I x[F,G] $:
\begin{itemize}
\item $ \mathfrak{I},w\Vdash^I\I x[F,G] $ iff $ \exists d\in E_{w}, \mathfrak{I},w\Vdash^IF $ and $\forall w^\prime(R(w,w^\prime)\text{~implies~} \forall e\in E_{w^\prime},  \forall w^{\prime\prime}\in W(R(w^\prime,w^{\prime\prime})$  $\text{~implies~}(\mathfrak{I},w^{\prime\prime}\Vdash^I F^{k_d}_{k_e}\text{~implies~}\mathfrak{I},w^{\prime\prime}\Vdash^I k_d=k_e)))$ and $ \mathfrak{I},w\Vdash^IG $.
\end{itemize}
\end{defn}

The semantics for $ \bf Int$ and $ \bf Int_\I$ is based on intuitionistic structures, and for $ \bf Int^{NF}$ and $ \bf Int_\I^{NF}$ on intuitionistic negative free structures.
\begin{defn}
An inference is valid iff it is truth-preserving in all worlds of all
interpretations.
\end{defn}

Let us present semantics for Nelson's logics on the basis of Thomason's semantics \cite{Thomason} (see also \cite{Priest}). However, in contrast to \cite{Thomason,Priest}, the semantics we use is two-valued with a paradefinite valuation (thus, a formula and its negation can simultaneously be true and false, or simultaneously neither true, nor false).
\begin{defn}[Nelsonian negative free structure]
A Nelsonian ne\-ga\-ti\-ve 
free structure $ \mathfrak{N} $ is an intuitionistic negative free structure $  \langle W,R,H,D,E,J, \varphi\rangle  $ such that $ \varphi $ is redefined as follows:
\begin{itemize}
\item $ \varphi_w(\Ex)=J $, $ \varphi_w(\neg\Ex)=H\setminus J $,
\item if $ c $ is a constant, then $ \varphi(c) \in D $,
\item if $ P $ is an $ n $-place predicate, then $  \varphi_w(P) \subseteq J^n $ and $  \varphi_w(\neg P) \subseteq J^n $,
\item $ \varphi_w(=)=\{\langle t,t\rangle \mid t\in J \} $, $ \varphi_w(\neg =)\subseteq J^2 $,
\item if $ wRw^\prime $, then $ \varphi_w(P)\subseteq\varphi_{w^\prime}(P) $ and $ \varphi_w(\neg P)\subseteq\varphi_{w^\prime}(\neg P) $,
\item if $ wRw^\prime $, then $ D_w\subseteq D_{w^\prime} $,
\item if $ \langle d_1,\ldots,d_n\rangle\in\varphi_w(P) $, then $ d_1\in\varphi_w(\Ex),\ldots,d_n\in\varphi_w(\Ex)$,
\item if $ \langle d_1,\ldots,d_n\rangle\in\varphi_w(\neg P) $, then $ d_1\in\varphi_w(\Ex),\ldots,d_n\in\varphi_w(\Ex)$.
\end{itemize}
\end{defn}

\begin{defn}[Nelsonian structure]
A Nelsonian structure is a Nelsonian negative free structure $ \mathfrak{I} = \langle W,R,H,D,E,J, \varphi\rangle  $ such that $ D=E $, and hence $ H=J $; and $ \varphi_w(\Ex)=\varphi_w(\neg\Ex)=D_w $.
\end{defn}

\begin{defn}[Nelsonian semantics]\label{NelsonSemantic}
A Nelsonian paradefinite valuation $ \Vdash^N $ on a model $ \mathfrak{N} =\linebreak \langle W,R,H,D,E,J, \varphi\rangle$ is defined as follows, for any $ w\in W$:\footnote{The truth conditions for non-negated formulas, including $ \I x[F,G] $, are the same as in the intuitionistic case.}
\begin{itemize}\itemsep=0pt
\item $ \mathfrak{N},w\Vdash^N P(t_1,\ldots,t_n) $ iff $ \langle |\varphi(t_1)|_w,\ldots,|\varphi(t_n)|_w\rangle\in \varphi_w(P^n) $,
\item $ \mathfrak{N},w\Vdash^N \neg P(t_1,\ldots,t_n) $ iff $ \langle |\varphi(t_1)|_w,\ldots,|\varphi(t_n)|_w\rangle\in \varphi_w(\neg P^n) $,
\item $ \mathfrak{N},w\Vdash^N\neg\neg A$ iff $ \mathfrak{N},w\Vdash^NA$,
\item $ \mathfrak{N},w\Vdash^N A\rightarrow B$ iff $ \forall w^\prime\in W(R(w,w^\prime)\text{~implies~}(\mathfrak{N},w^\prime\Vdash^N A\text{~implies~}\mathfrak{N},w^\prime\Vdash^N B))$,
\item $ \mathfrak{N},w \Vdash^N \neg(A\rightarrow B)$ iff $ \mathfrak{N},w\Vdash^N A $ and $ \mathfrak{N},w\Vdash^N\neg B$,
\item $ \mathfrak{N},w\Vdash^N A\wedge B$ iff $ \mathfrak{N},w\Vdash^N A $ and $ \mathfrak{N},w\Vdash^N B$,
\item $ \mathfrak{N},w\Vdash^N \neg(A\wedge B)$ iff $ \mathfrak{N},w\Vdash^N\neg A $ or $ \mathfrak{N},w\Vdash^N\neg B$,
\item $ \mathfrak{N},w\Vdash^N A\vee B$ iff $ \mathfrak{N},w\Vdash^N A $ or $ \mathfrak{N},w\Vdash^N B$,
\item $ \mathfrak{N},w\Vdash^N \neg(A\vee B)$ iff $ \mathfrak{N},w\Vdash^N\neg A $ and $ \mathfrak{N},w\Vdash^N\neg B$,
\item $ \mathfrak{N},w\Vdash^N\forall x A $ iff $ \forall w^\prime(R(w,w^\prime)\text{~implies~} \forall d\in D_{w^\prime}, \mathfrak{N},w^\prime\Vdash^NA^x_{k_d}) $,
\item $ \mathfrak{N},w\Vdash^N\neg\forall x A $ iff $ \exists d\in D_{w^\prime}, \mathfrak{N},w^\prime\Vdash^N\neg A^x_{k_d} $,
\item $ \mathfrak{N},w\Vdash^N\exists x A $ iff $ \exists d\in D_{w^\prime}, \mathfrak{N},w^\prime\Vdash^NA^x_{k_d} $,
\item $ \mathfrak{N},w\Vdash^N\neg\exists x A $ iff $ \forall w^\prime(R(w,w^\prime)\text{~implies~} \forall d\in D_{w^\prime}, \mathfrak{N},w^\prime\Vdash^N\neg A^x_{k_d}) $;
\end{itemize}

Using the fact that $ \I x[F,G] $ and $ \exists x(F \wedge \forall y(F^x_y\rightarrow y=x)\wedge G)$ are interderivable as well as $ \neg\I x[F,G] $ and $\Ex y,\forall x(\neg F \vee \exists y (F^x_y \wedge \neg y=x)\vee \neg G)$ are interderivable, we can propose the following semantic condition for $ \I x[F,G] $ and $ \neg\I x[F,G] $:
\begin{itemize}
\item $ \mathfrak{N},w\Vdash^N\I x[F,G] $ iff $ \exists d\in E_{w}, \mathfrak{N},w\Vdash^IF $ and $\forall w^\prime(R(w,w^\prime)\text{~implies~} \forall e\in E_{w^\prime},  \forall w^{\prime\prime}\in W(R(w^\prime,w^{\prime\prime})$  $\text{~implies~}(\mathfrak{N},w^{\prime\prime}\Vdash^N F^{k_d}_{k_e}\text{~implies~}\mathfrak{N},w^{\prime\prime}\Vdash^N k_d=k_e)))$ and $ \mathfrak{N},w\Vdash^NG $,
\item $ \mathfrak{N},w\Vdash^N\neg\I x[F,G] $ iff $ \langle |\varphi(y)|_w\rangle\in \varphi_w(\Ex) $ and $ \forall w^\prime(R(w,w^\prime)\text{~implies~} \forall d\in D_{w^\prime}, \mathfrak{N},w^\prime\Vdash^N\neg F\text{~or~}\exists e\in D_{w^\prime}, (\mathfrak{N},w^\prime\Vdash^NF^{k_d}_{k_e}\text{~and~}\mathfrak{N},w^\prime\Vdash^N \neg k_e=k_d)$ or $\mathfrak{N},w^\prime\Vdash^N\neg G) $.
\end{itemize}
\end{defn}

The semantics for $ \bf N4$ and $ \bf N4_\I$ is based on intuitionistic structures, and for $ \bf N4^{NF}$ and $ \bf N4_\I^{NF}$ on intuitionistic negative free structures.
\begin{defn}
An inference is valid iff it is truth-preserving in all worlds of all
interpretations.
\end{defn}
\section{Embedding theorems}\label{EMB}
We use an embedding function similar to the one used by Gurevich \cite{Gurevich}, Rautenberg \cite{Rautenberg}, Vorob'ev \cite{Vorobev} for \textbf{N3} and \textbf{Int} as well as Kamide and Shramko \cite{KamideShramko} for some multilattice logics. One of the specifics this function is the necessity to extend the language of intuitionistic logic with the additional copies of predicate letters. So extend the language $ L^\bot $ with the set $ \{P^\prime\mid P\text{~is a predicate letter} \} $.

\begin{defn}\label{tau}
An embedding function $ \tau $ from the language $ L^\non $ into the language $ L^\bot $ is inductively defined as follows\textup{:}
	\begin{enumerate}[$(1)$]\itemsep=0pt
\item $ \tau(P(t_1,\ldots,t_n))=P(t_1,\ldots,t_n) $,  for any predicate $ P $,
\item  $ \tau(\neg P(t_1,\ldots,t_n))=P^\prime(t_1,\ldots,t_n) $, for any predicate $ P $, 
		\item $ \tau(A\ast B)=\tau(A)\ast \tau(B) $, where $ \ast\in\{\rightarrow,\wedge,\vee \}$
		\item $ \tau(\non\non A)= \tau(A)$,
		\item $ \tau(\non(A\rightarrow B))= \tau(A) \wedge \tau(\non B) $,
		\item $ \tau(\non(A\wedge B))=\tau(\non A)\vee \tau(\non B) $,
		\item $ \tau(\non(A\vee B))=\tau(\non A)\wedge \tau(\non B) $,
\item $ \tau(\forall x A)=\forall x\tau(A) $,		
\item $ \tau(\exists x A)=\exists x\tau(A) $,
\item $ \tau(\neg\forall x A)=\exists x\tau(\neg A) $,		
\item $ \tau(\neg\exists x A)=\forall x\tau(\neg A) $.
	\end{enumerate}
\end{defn} 

Let us prove the following theorem for $ \bf N4  $ and $ \bf Int $ as well as their negation free versions. A similar theorem has been proven in \cite{Gurevich,Rautenberg,Vorobev} for $ \bf N3$ and $ \bf Int $.
\begin{thm}[Syntactical embedding]\label{SyntacticEmb}
Let $ \tau $ be a mapping introduced in Definition \ref{tau}. For any formula $ A $, $ \vdash_{\bf N4}A $ iff $ \vdash_{\bf Int}\tau(A) $; $ \vdash_{\bf N4^{NF}}A $ iff $ \vdash_{\bf Int^{NF}}\tau(A) $.
\end{thm}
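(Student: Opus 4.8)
The plan is to prove both biconditionals by induction on the structure of derivations (for the "only if" direction) and on the structure of formulas together with a translation of derivations (for the "if" direction). The natural deduction systems for $\mathbf{N4}$ and $\mathbf{Int}$ differ precisely in that $\mathbf{N4}$ has explicit rules for negated connectives, quantifiers, and predicates, whereas $\mathbf{Int}$ collapses falsity through $(\bot E)$. The embedding $\tau$ is designed so that a negated $\mathbf{N4}$-formula is pushed inward until it reaches atoms, where $\neg P$ becomes a fresh primed predicate $P'$. The key conceptual point is that $\tau$ is a bijection (up to the primed atoms) between $\mathbf{N4}$-provability patterns and $\mathbf{Int}$-provability patterns, because the $\mathbf{N4}$ rules for negated connectives mirror exactly the ordinary $\mathbf{Int}$ rules for the connective that $\tau$ produces.

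\medskip

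\noindent\textbf{First I would} prove the left-to-right direction: assume $\vdash_{\mathbf{N4}} A$ and show $\vdash_{\mathbf{Int}} \tau(A)$ by induction on the length of the $\mathbf{N4}$-derivation, transforming it rule by rule. For each non-negated rule (say $(\wedge I)$, $(\rightarrow E)$, $(\forall I)$, etc.), clause (3) and clauses (8)--(9) of Definition \ref{tau} guarantee that $\tau$ commutes with the connective, so the same rule applies in $\mathbf{Int}$ to the translated premises. For each negated rule the translation converts it into an ordinary $\mathbf{Int}$ rule: for instance $(\neg{\wedge}E)$ on $\neg(A\wedge B)$ becomes, under clause (6), a use of $(\vee E)$ on $\tau(\neg A)\vee\tau(\neg B)$; the rules $(\neg\neg I)$ and $(\neg\neg E)$ become trivial since $\tau(\neg\neg A)=\tau(A)$; and the atomic rules $(\neg P)$-style premises translate to the primed atom $P'$. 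The quantifier clauses (10)--(11) handle $(\neg\forall E)$ and $(\neg\exists I)$ by turning them into $(\exists E)$ and $(\forall I)$ respectively. I would present this as a uniform case analysis, verifying that the side conditions on variables (freeness, eigenvariable restrictions) are preserved because $\tau$ does not alter the term or variable structure, only the propositional skeleton.

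\medskip

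\noindent\textbf{For the converse} direction, assume $\vdash_{\mathbf{Int}} \tau(A)$ and derive $\vdash_{\mathbf{N4}} A$. Here the natural strategy is to reverse the translation: every $\mathbf{Int}$-derivation of a formula in the image of $\tau$ can be read back as an $\mathbf{N4}$-derivation, since each $\mathbf{Int}$ rule step corresponds to the matching $\mathbf{N4}$ rule (possibly a negated one) on the $\tau$-preimages. The one genuinely new ingredient in $\mathbf{Int}$ is the rule $(\bot E)$, but $\bot$ never appears in $\tau(A)$ for $A\in L^{\neg}$, and because the primed predicates $P'$ are fresh and unconstrained, no $\mathbf{Int}$-derivation of $\tau(A)$ is forced to invoke $\bot$ in an essential way. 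To make this rigorous I would argue that if $\vdash_{\mathbf{Int}}\tau(A)$ then there is a cut-free or normal derivation in which every formula occurring is itself in the image of $\tau$ (a subformula-property argument), so that the back-translation is well-defined at every node. The negative-free versions $\mathbf{N4}^{NF}$ and $\mathbf{Int}^{NF}$ are handled in exactly the same way, the only additional cases being the existence-predicate rules $(\mathrm{PD})$ and $(\neg\mathrm{PD})$, which under $\tau$ both yield $\mathcal{E}t_i$ from an atom (primed or unprimed) and so collapse to the single $\mathbf{Int}^{NF}$ predicate rule.

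\medskip

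\noindent\textbf{The main obstacle} I expect is the converse direction, specifically ensuring that an arbitrary $\mathbf{Int}$-derivation of $\tau(A)$ stays within the image of $\tau$. Because $\tau$ is not surjective onto all of $L^{\bot}$ (primed predicates may be combined freely, and nothing stops an $\mathbf{Int}$-proof from introducing auxiliary formulas not of the form $\tau(B)$), the naive back-translation is not immediately defined. I would overcome this either by appealing to normalization for the intuitionistic system so that the subformula property restricts every intermediate formula to the relevant fragment, or by defining an inverse map $\sigma$ on the full extended language $L^{\bot}$ into $L^{\neg}$ and checking that $\sigma(\tau(A))$ is $\mathbf{N4}$-equivalent to $A$ while $\sigma$ sends each $\mathbf{Int}$ rule to an admissible $\mathbf{N4}$ rule. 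The former route is cleaner if normalization is already available for these Tennant-style systems; the latter is more self-contained but requires care in verifying that $\sigma$ respects the eigenvariable conditions and the treatment of the primed atoms.
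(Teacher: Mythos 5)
Your forward direction is exactly the paper's proof: induction on the length of the $\mathbf{N4}$-derivation with a case split on the last rule applied, translating each negated rule into the ordinary $\mathbf{Int}$ rule for the connective that $\tau$ produces (the paper works out precisely the cases $(\neg\mathrm{PD})$, $(\neg{\rightarrow}I)$, $(\neg\forall I)$ and declares the rest similar, including the collapse of $(\mathrm{PD})/(\neg\mathrm{PD})$ that you note). Where you genuinely depart from the paper is the converse direction. The paper dismisses it with one sentence (``Similarly to previous cases''), i.e.\ it treats the two directions as symmetric; you correctly observe that they are not: an arbitrary $\mathbf{Int}$-derivation of $\tau(A)$ may pass through formulas outside the image of $\tau$ (for instance a final $(\rightarrow E)$ whose minor premise is any formula of $L^\bot$ whatsoever), so the naive rule-by-rule back-translation is not even defined at every node. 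Your normalization/subformula-property repair is the right way to close this: in a normal closed $\mathbf{Int}$-derivation of the $\bot$-free formula $\tau(A)$, every formula occurring is a subformula of $\tau(A)$, hence $\bot$-free; since every $\bot$-free formula over the atoms $P,P'$ is $\tau$ of some $L^\neg$-formula, $(\bot E)$ never occurs and the back-translation applies throughout. So your proposal supplies rigour at exactly the point where the paper's own text is thinnest, at the price of invoking normalization for $\mathbf{Int}$ and $\mathbf{Int}^{NF}$ (available from Prawitz and Tennant/K\"urbis, so this is a legitimate appeal). One caveat: your second, ``self-contained'' alternative --- a total inverse map $\sigma$ on $L^\bot$ --- cannot work as stated, because $\sigma(\bot)$ would have to be an explosive formula of $\mathbf{N4}$ in order for $\sigma$ to send $(\bot E)$ to an admissible rule, and the paraconsistent $\mathbf{N4}$ has no such formula; that route would itself need the subformula property to argue $\bot$ never occurs, so it collapses into the first one. (Also note $\tau$ is not injective, since $\tau(\neg\neg A)=\tau(A)$; the back-translation only yields a formula $\mathbf{N4}$-interderivable with $A$ via $(\neg\neg I)/(\neg\neg E)$, which is routine but should be said.)
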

\begin{proof}
As an example, we present a proof for the case of negative free logics.
 
Suppose that $ \vdash_{\bf N4^{NF}}A $. By an induction on the length of the deduction
of $ A $. We distinguish cases depending on the last rule applied in the
deduction. 

Suppose that $ A $ is of the form $ \Ex t_i $ and has been obtained by the rule ($ \non $PD) from the formula $\non P(t_1,\ldots,t_n) $. By the induction hypothesis, the translation $\tau(\Ex t_i)$ is provable in $ \bf Int$. Then we can construct a deduction of the translation of $ \tau(\Ex t_i)=\Ex t_i $ in \textbf{Int} using the rule (PD):
\begin{center}
	\EnableBpAbbreviations
	\AXC{$\non P(t_1,\ldots,t_n)$}
	\RL{($\non$PD)}
	\UIC{$\Ex t_i$}
	\DisplayProof$\; \rightsquigarrow\; $
	\EnableBpAbbreviations
	\AXC{$P^\prime(t_1,\ldots,t_n)$}
	\RL{(PD)}
	\UIC{$\Ex t_i$}
	\DisplayProof
\end{center}

Suppose that $ A $ is of the form $ \non(B\rightarrow C) $ and has been obtained by the rule $ (\non{\rightarrow} I) $ from the formulas $ B $ and $ \non C $. By the induction hypothesis, the translations $\tau(B)$ and $\tau(\non C)$ are provable in $ \bf Int$. Then we can construct a deduction of the translation of $ \tau(\non(B\rightarrow C))=\tau(B)\wedge \tau(\non C) $ in \textbf{Int} using the rule $(\wedge I )$:
\begin{center}
	\EnableBpAbbreviations
	\AXC{$B$}
	\AXC{$\non C$}
	\RL{$ (\non{\rightarrow} I) $}
	\BIC{$ \non(B\rightarrow C)$}
	\DisplayProof$\; \rightsquigarrow\; $
	\EnableBpAbbreviations
	\AXC{$\tau(B)$}
	\AXC{$\tau(\non C)$}
	\RL{$ (\wedge I) $}
	\BIC{$ \tau(B)\wedge \tau(\non C)$}
	\DisplayProof
\end{center}

Suppose that $ A $ is of the form $ \neg\forall x B $ and has been obtained by the rule $ (\non\forall I) $ from the formulas $\neg B^x_t $ and $ \Ex t$. By the induction hypothesis, the translations $\tau(\neg B^x_t)$ and $\tau(\Ex t)$ are provable in $ \bf Int$. Then we can construct a deduction of the translation of $ \tau(\neg\forall x B)=\exists x\tau(\neg B) $ in \textbf{Int} using the rule $(\exists I )$:
\begin{center}
	\EnableBpAbbreviations
	\AXC{$\neg B^x_t$}
	\AXC{$\Ex t$}
	\RL{$ (\non\forall I) $}
	\BIC{$ \neg\forall x B$}
	\DisplayProof$\; \rightsquigarrow\; $
	\EnableBpAbbreviations
	\AXC{$\tau(\neg B^x_t)$}
	\AXC{$\tau(\Ex t)$}
	\RL{$ (\exists I ) $}
	\BIC{$ \exists x\tau(\neg B)$}
	\DisplayProof
\end{center}

The other cases are considered similarly.

Suppose that $ \vdash_{\bf Int^{NF}}\tau(A) $. Similarly to previous cases. 
\end{proof}

\begin{lemma}\label{SemanticLemma1}
Let $ \mathfrak{N}=\langle W,R,H,D,E,J, \varphi\rangle $ be a Nelsonian (negative free) structure. Let $ \tau $ be the mapping defined in Definition \ref{tau}. For
any Nelsonian paradefinite valuation $ \Vdash^N $ on $ \mathfrak{N} $, we can construct an intuitionistic valuation $ \Vdash^I $ on an
intuitionistic (negative free) structure $ \mathfrak{I}=\langle W,R,H,D,E,J, \varphi\rangle $ such that for any formula $ C $, $ \mathfrak{N}\Vdash^NC $ iff $ \mathfrak{I}\Vdash^I\tau(C) $.
\end{lemma}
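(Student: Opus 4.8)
The plan is to leave the frame untouched and to recover the intuitionistic interpretation of predicates from the positive and negative extensions that the paradefinite valuation attaches to each Nelsonian predicate. Concretely, I would let $\mathfrak{I}$ carry the same components $\langle W,R,H,D,E,J\rangle$ as $\mathfrak{N}$, interpret constants as in $\mathfrak{N}$, and define its interpretation $\varphi^{\mathfrak{I}}$ on the extended signature by $\varphi^{\mathfrak{I}}_w(P):=\varphi^{\mathfrak{N}}_w(P)$ and $\varphi^{\mathfrak{I}}_w(P^\prime):=\varphi^{\mathfrak{N}}_w(\neg P)$, for every predicate $P$ (with $=$ and $\Ex$ included). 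The first task is to verify that $\mathfrak{I}$ is a bona fide intuitionistic (negative free) structure, and here the Nelsonian clauses do the work: $\varphi^{\mathfrak{N}}_w(P)\subseteq J^n$ and $\varphi^{\mathfrak{N}}_w(\neg P)\subseteq J^n$ yield $\varphi^{\mathfrak{I}}_w(P),\varphi^{\mathfrak{I}}_w(P^\prime)\subseteq J^n$; the two heredity clauses of $\mathfrak{N}$ (monotonicity of $\varphi_w(P)$ and of $\varphi_w(\neg P)$ along $R$) give monotonicity of $\varphi^{\mathfrak{I}}$ on $P$ and on $P^\prime$; and the two definedness clauses of $\mathfrak{N}$ supply the single intuitionistic definedness condition for $P$ and for $P^\prime$. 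The identity requirement $\varphi^{\mathfrak{N}}_w(=)=\{\langle t,t\rangle\mid t\in J\}$ is literally the intuitionistic condition on $=$, while $(=)^\prime$ is handled as an ordinary predicate with extension $\varphi^{\mathfrak{N}}_w(\neg =)$.

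With $\mathfrak{I}$ in place I would prove the biconditional at each world by induction on the number of logical symbols of $C$; this measure is well founded here because every $\tau$-clause for a negated compound recurses on strictly shorter formulas (for instance $\tau(\neg(A\rightarrow B))=\tau(A)\wedge\tau(\neg B)$ calls $\tau$ on $A$ and on $\neg B$, and $\tau(\neg\neg A)=\tau(A)$ removes two negations). Recall that $C$ lives in the $\I$-free language $L^\neg$, so the binary quantifier never arises. The two base cases fall straight out of the construction: for $C=P(t_1,\ldots,t_n)$ one has $\tau(C)=C$ and $\mathfrak{N},w\Vdash^N C$ iff the tuple of values of $t_1,\ldots,t_n$ lies in $\varphi^{\mathfrak{N}}_w(P)=\varphi^{\mathfrak{I}}_w(P)$ iff $\mathfrak{I},w\Vdash^I\tau(C)$; for $C=\neg P(t_1,\ldots,t_n)$ one has $\tau(C)=P^\prime(t_1,\ldots,t_n)$ and the tuple lies in $\varphi^{\mathfrak{N}}_w(\neg P)=\varphi^{\mathfrak{I}}_w(P^\prime)$, which is exactly the satisfaction condition for $\tau(C)$. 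Identity and the existence predicate are special instances of these two cases.

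For the inductive step I would dispatch the positive connectives and quantifiers first: there $\tau$ is a homomorphism and the Nelsonian and intuitionistic satisfaction clauses coincide word for word, so the induction hypothesis on the immediate subformulas closes each case. The negated compounds are the crux of the embedding, and they succeed precisely because the Nelsonian satisfaction clauses for negation are the strong-negation patterns that $\tau$ records: $\neg\neg A$ aligns with clause $(4)$; $\neg(A\rightarrow B)$ with clause $(5)$, via ``$\mathfrak{N},w\Vdash^N\neg(A\rightarrow B)$ iff $\mathfrak{N},w\Vdash^N A$ and $\mathfrak{N},w\Vdash^N\neg B$''; $\neg(A\wedge B)$ and $\neg(A\vee B)$ with clauses $(6)$ and $(7)$; and $\neg\forall xA$, $\neg\exists xA$ with clauses $(10)$, $(11)$. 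In each case one rewrites using the Nelson clause, applies the induction hypothesis to the shorter subformulas named on the right of the matching $\tau$-clause, and reads off the result. The quantifier cases additionally call for the routine substitution lemma $\tau(A^x_{k_d})=(\tau A)^x_{k_d}$, proved by a subsidiary induction on $A$, together with the observation that $\mathfrak{N}$ and $\mathfrak{I}$ quantify over identical domains because they share the frame.

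I expect the main obstacle to be the structural verification of the first paragraph rather than the induction, which is essentially bookkeeping once the clauses are aligned. The delicate point is to confirm that sending each primed predicate $P^\prime$ to $\varphi^{\mathfrak{N}}_w(\neg P)$ respects heredity and definedness simultaneously, and that the existence predicate is governed by its own clauses in both structures rather than by the generic definedness condition (so that $\varphi^{\mathfrak{N}}_w(\neg\Ex)=H\setminus J$ is not wrongly required to sit inside $J$). Once this admissibility check is secured and the quantifier domains are seen to coincide, the equivalence $\mathfrak{N}\Vdash^N C$ iff $\mathfrak{I}\Vdash^I\tau(C)$ follows by the induction sketched above.
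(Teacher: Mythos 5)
Your proposal is correct and takes essentially the same route as the paper: the paper's proof simply stipulates an intuitionistic valuation satisfying, for atoms, $\mathfrak{N},w\Vdash^N P(t_1,\ldots,t_n)$ iff $\mathfrak{I},w\Vdash^I P(t_1,\ldots,t_n)$ and $\mathfrak{N},w\Vdash^N \neg P(t_1,\ldots,t_n)$ iff $\mathfrak{I},w\Vdash^I P^\prime(t_1,\ldots,t_n)$, and then runs the same induction on $C$ aligned with the clauses of $\tau$ (atomic, negated atomic, $A\rightarrow B$, $\neg(A\rightarrow B)$, $\forall xA$, $\neg\forall xA$, rest similar). Your explicit construction of $\varphi^{\mathfrak{I}}$ with $\varphi^{\mathfrak{I}}_w(P^\prime):=\varphi^{\mathfrak{N}}_w(\neg P)$, the admissibility check, and the caveat about $\varphi^{\mathfrak{N}}_w(\neg\Ex)=H\setminus J$ make rigorous (and, in the free case, slightly repair) what the paper asserts by fiat, so the two arguments coincide in substance.
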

\begin{proof}
As an example, we give a proof for the case of non-free logics. 
Let $ \mathcal{P} $ be a set of atomic formulas and let $ \mathcal{P}^\prime $ be the set $ \{P^\prime(t_1,\ldots,t_n) \mid P(t_1,\ldots,t_n)\in\mathcal{P} \} $ of
atomic formulas. Suppose that $ \Vdash^N $ is a Nelsonian paradefinite valuation on $ \mathfrak{N} $. Suppose that $ \Vdash^I $ is
an intuitionistic valuation on $ \mathfrak{I} $ such that, for any $ w\in W $ and for any atomic formula $ P(t_1,\ldots,t_n)\in \mathcal{P}\cup \mathcal{P}^\prime $,
\begin{enumerate}[(a)]\itemsep=0pt
\item $ \mathfrak{N},w\Vdash^N P(t_1,\ldots,t_n) $ iff $ \mathfrak{I},w\Vdash^I P(t_1,\ldots,t_n) $,
\item $ \mathfrak{N},w\Vdash^N\neg P(t_1,\ldots,t_n) $ iff $ \mathfrak{I},w\Vdash^I P^\prime(t_1,\ldots,t_n) $.
\end{enumerate}
The lemma is proved by induction on $ C $.

\begin{enumerate}[(1)]\itemsep=0pt
\item 
 $C$ is an atomic formula $ P(t_1,\ldots,t_n) $: $ \mathfrak{N},w\Vdash^N P(t_1,\ldots,t_n) $ iff $ \mathfrak{I},w\Vdash^I P(t_1,\ldots,t_n) $ (by the
assumption) iff $ \mathfrak{I},w\Vdash^I \tau(P(t_1,\ldots,t_n)) $ (by Definition \ref{tau}).
\item   
 $C$ is a negated atomic formula $ \neg P(t_1,\ldots,t_n) $: $ \mathfrak{N},w\Vdash^N\neg P(t_1,\ldots,t_n) $ iff $ \mathfrak{I},w\Vdash^I P^\prime(t_1,\ldots,t_n) $ (by the
assumption) iff $ \mathfrak{I},w\Vdash^I \tau(\neg P(t_1,\ldots,t_n)) $ (by Definition \ref{tau}).
\item $C$ is $ A\rightarrow B$: $ \mathfrak{N},w\Vdash^N A\rightarrow B$ iff $ \forall w^\prime\in W(R(w,w^\prime)$ implies $(\mathfrak{N},w^\prime\Vdash^N A\text{~implies~}\mathfrak{N},w^\prime\Vdash^N B))$ (by Definition \ref{IntSemantic}) iff 
$ \forall w^\prime\in W(R(w,w^\prime)\text{~implies~}(\mathfrak{I},w^\prime\Vdash^I \tau(A)\text{~implies~}\mathfrak{I},w^\prime\Vdash^I \tau(B)))$ (by the induction hypothesis) iff $ \mathfrak{I},w\Vdash^I \tau(A\rightarrow B)$ (by Definition \ref{IntSemantic}).
\item $C$ is $ \neg(A\rightarrow B)$: $ \mathfrak{N},w \Vdash^N \neg(A\rightarrow B)$ iff $ \mathfrak{N},w\Vdash^N A $ and $ \mathfrak{N},w\Vdash^N\neg B$ (be Definition \ref{NelsonSemantic}) iff  $ \mathfrak{I},w\Vdash^I \tau(A) $ and $ \mathfrak{I},w\Vdash^I\tau(\neg B)$ (by the induction hypothesis) iff $ \mathfrak{I},w \Vdash^I \tau(A)\wedge \tau (\neg B)$ (by Definition \ref{IntSemantic}) iff $ \mathfrak{I},w \Vdash^I \tau(\neg(A\rightarrow B))$ (by Definition \ref{tau}).
\item $ C $ is $ \forall x A$: $ \mathfrak{N},w\Vdash^N\forall x A $ iff $ \forall w^\prime(R(w,w^\prime)\text{~implies~} \forall d\in D_{w^\prime}, \mathfrak{N},w^\prime\Vdash^NA^x_{k_d}) $ (by Definition \ref{NelsonSemantic}) iff $ \forall w^\prime(R(w,w^\prime)\text{~implies~} \forall d\in D_{w^\prime}, \mathfrak{I},w^\prime\Vdash^I\tau(A^x_{k_d})) $ (by the induction hypothesis) iff $ \mathfrak{I},w\Vdash^I\forall x A $ (by Definition \ref{IntSemantic}) iff $ \mathfrak{I},w\Vdash^I\tau(\forall x A) $ (by Definition \ref{tau}).
\item $ C $ is $ \neg\forall x A$: $ \mathfrak{N},w\Vdash^N\neg\forall x A $ iff $ \exists d\in D_{w^\prime}, \mathfrak{N},w^\prime\Vdash^N\neg A^x_{k_d} $ (by Definition \ref{NelsonSemantic}) iff $ \exists d\in D_{w^\prime}, \mathfrak{I},w^\prime\Vdash^I\tau(\neg A^x_{k_d}) $ (by the induction hypothesis) iff $ \mathfrak{I},w\Vdash^I\exists x \tau(\neg A) $ (by Definition \ref{IntSemantic}) iff $ \mathfrak{I},w\Vdash^I\tau(\neg\forall x A) $ (by Definition \ref{tau}).
\end{enumerate}

The other cases are considered similarly.
\end{proof}

\begin{lemma}\label{SemanticLemma2}
Let $ \mathfrak{I}=\langle W,R,H,D,E,J, \varphi\rangle $ be an intuitionistic (negative free) structure. Let $ \tau $ be the mapping defined in Definition \ref{tau}. For
any intuitionistic valuation $ \Vdash^I $ on $ \mathfrak{I} $, we can construct a Nelsonian paraconsistent valuation $ \Vdash^N $ on an
Nelsonian (negative free) structure $ \mathfrak{N}=\langle W,R,H,D,E,J, \varphi\rangle $ such that for any formula $ C $, $ \mathfrak{N}\Vdash^NC $ iff $ \mathfrak{I}\Vdash^I\tau(C) $.
\end{lemma}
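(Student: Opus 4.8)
The plan is to run the argument of Lemma~\ref{SemanticLemma1} in the opposite direction; since $ \tau $ is defined only on $ L^\neg $, no clause for $ \I $ is needed and the induction has to treat exactly the binary connectives, the quantifiers, and their negations. Mirroring the proof of Lemma~\ref{SemanticLemma1}, I would keep the frame $ \langle W,R,H,D,E,J\rangle $ fixed and only fix the interpretation of atoms and negated atoms under the new valuation. Concretely, with $ \mathcal{P} $ the set of atomic formulas and $ \mathcal{P}^\prime=\{P^\prime(t_1,\ldots,t_n)\mid P(t_1,\ldots,t_n)\in\mathcal{P}\} $, I would define $ \Vdash^N $ on atoms and negated atoms by requiring, for every $ w\in W $ and every $ P(t_1,\ldots,t_n) $,
\begin{enumerate}[(a)]\itemsep=0pt
\item $ \mathfrak{N},w\Vdash^N P(t_1,\ldots,t_n) $ iff $ \mathfrak{I},w\Vdash^I P(t_1,\ldots,t_n) $,
\item $ \mathfrak{N},w\Vdash^N \neg P(t_1,\ldots,t_n) $ iff $ \mathfrak{I},w\Vdash^I P^\prime(t_1,\ldots,t_n) $,
\end{enumerate}
that is, the positive extension $ \varphi_w(P) $ is copied from the intuitionistic interpretation of $ P $ and the negative extension $ \varphi_w(\neg P) $ from the interpretation of its primed copy $ P^\prime $. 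All remaining clauses of $ \Vdash^N $ are then forced by Definition~\ref{NelsonSemantic}. As in Lemma~\ref{SemanticLemma1} I would present the non-free case as the example.

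Next I would check that this data genuinely determines a Nelsonian (negative free) structure. The key observation is that $ \mathfrak{I} $ treats each primed letter $ P^\prime $ as an ordinary predicate of $ L^\bot $, so in the intuitionistic structure its extension already satisfies every constraint imposed on predicates: it lies in $ J^n $, it is hereditary along $ R $ (if $ wRw^\prime $ then $ \varphi_w(P^\prime)\subseteq\varphi_{w^\prime}(P^\prime) $), and each argument of a tuple in $ \varphi_w(P^\prime) $ lies in the existence set. Transporting these to $ \varphi_w(\neg P) $ yields exactly the Nelsonian clauses $ \varphi_w(\neg P)\subseteq J^n $, the heredity of $ \varphi_w(\neg P) $, and the argument-existence condition for negated atoms; the matching clauses for $ \varphi_w(P) $ come from the interpretation of the unprimed $ P $, and the fixed conditions on $ \varphi_w(\Ex) $, $ \varphi_w(\neg\Ex) $ and $ \varphi_w(=) $ are inherited unchanged from the frame. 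Hence $ \mathfrak{N} $ is a legitimate Nelsonian structure.

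Finally I would prove $ \mathfrak{N},w\Vdash^N C $ iff $ \mathfrak{I},w\Vdash^I\tau(C) $ by induction on $ C $. The two base cases are (a) and (b) combined with Definition~\ref{tau}, and every inductive step is the exact transpose of the corresponding case of Lemma~\ref{SemanticLemma1}: for $ \ast\in\{\wedge,\vee,\rightarrow\} $ and for $ \forall,\exists $ the Nelson truth condition coincides verbatim with the intuitionistic one, so the hypothesis pushes straight through; for the negated compounds one aligns the De~Morgan-style clauses of Definition~\ref{NelsonSemantic} with clauses (5)--(7) and (10)--(11) of Definition~\ref{tau}; and the case $ C=\neg\neg A $ uses $ \mathfrak{N},w\Vdash^N\neg\neg A $ iff $ \mathfrak{N},w\Vdash^N A $ against $ \tau(\neg\neg A)=\tau(A) $. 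I do not expect a deep obstacle here, since the whole construction is dual to Lemma~\ref{SemanticLemma1}; the only point demanding real attention is the well-definedness step of the previous paragraph, namely confirming that copying the intuitionistic extension of each $ P^\prime $ into the negative part of $ \mathfrak{N} $ respects heredity and the existence constraints simultaneously for $ P $ and $ P^\prime $. Once that bookkeeping is in place, the negated-quantifier steps, where the clause for $ \neg\forall xA $ (resp.\ $ \neg\exists xA $) must be matched with $ \exists x\tau(\neg A) $ (resp.\ $ \forall x\tau(\neg A) $) via the named constants $ k_d $ and domain heredity, go through exactly as their counterparts in Lemma~\ref{SemanticLemma1}, with no new idea required.
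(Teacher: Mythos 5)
Your proposal is correct and matches the paper's intent exactly: the paper's entire proof of this lemma is the single line ``Similarly to Lemma \ref{SemanticLemma1}'', i.e.\ run the same atomic correspondence (a)/(b) in the reverse direction, which is precisely what you do. Your additional well-definedness check (that copying the extensions of $P$ and $P^\prime$ into the positive and negative parts of $\varphi$ yields a legitimate Nelsonian structure) is a point the paper leaves implicit, and it is a worthwhile piece of care rather than a deviation in approach.
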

\begin{proof}
Similarly to Lemma \ref{SemanticLemma1}. 
\end{proof}
\begin{thm}[Semantic embedding]\label{SemanticEmb}
Let $ \tau $ be a mapping introduced in Definition \ref{tau}. For any formula $ C $, $ \models_{\bf N4}C $ iff $ \models_{\bf Int}\tau(C) $; $ \models_{\bf N4^{NF}}C $ iff $ \models_{\bf Int^{NF}}\tau(C) $.
\end{thm}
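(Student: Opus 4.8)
The plan is to derive the theorem directly from the two semantic lemmas, Lemma~\ref{SemanticLemma1} and Lemma~\ref{SemanticLemma2}, which between them establish a world-by-world correspondence between Nelsonian and intuitionistic valuations built on the \emph{same} underlying tuple $\langle W,R,H,D,E,J,\varphi\rangle$. Since validity here means truth at every world of every interpretation, the key observation is that each lemma converts one style of model into the other without disturbing the frame, so that a counter-model on one side is mechanically turned into a counter-model on the other. I would treat both clauses of the theorem by the same argument, using intuitionistic (resp.\ Nelsonian) structures for the first clause and their negative free counterparts for the second.

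For the implication $\models_{\bf N4}C \Rightarrow \models_{\bf Int}\tau(C)$ I would argue contrapositively. Assume $\not\models_{\bf Int}\tau(C)$, so there are an intuitionistic structure $\mathfrak{I}$, a valuation $\Vdash^I$ on it, and a world $w$ with $\mathfrak{I},w\nVdash^I\tau(C)$. Applying Lemma~\ref{SemanticLemma2} to $\mathfrak{I}$ and $\Vdash^I$ yields a Nelsonian structure $\mathfrak{N}=\langle W,R,H,D,E,J,\varphi\rangle$ and a paradefinite valuation $\Vdash^N$ satisfying $\mathfrak{N},w\Vdash^N C$ iff $\mathfrak{I},w\Vdash^I\tau(C)$ at each world; in particular $\mathfrak{N},w\nVdash^N C$, whence $\not\models_{\bf N4}C$. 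The converse implication $\models_{\bf Int}\tau(C)\Rightarrow\models_{\bf N4}C$ is entirely symmetric: from a Nelsonian counter-model for $C$ one invokes Lemma~\ref{SemanticLemma1} to manufacture an intuitionistic counter-model for $\tau(C)$.

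The two clauses of the theorem differ only in the class of structures quantified over, and the lemmas are stated for precisely those classes, so the negative free case is obtained by substituting ``intuitionistic negative free structure'' and ``Nelsonian negative free structure'' throughout the same reasoning.

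I expect no genuine obstacle: the substantive work has already been discharged in the inductive proofs of Lemmas~\ref{SemanticLemma1} and~\ref{SemanticLemma2}. The only point needing care is the bookkeeping of the quantifier alternation hidden in the definition of validity --- one must apply the per-world equivalence supplied by the lemmas with the world $w$ and the matched pair of structures held fixed, and must use Lemma~\ref{SemanticLemma2} for the left-to-right implication and Lemma~\ref{SemanticLemma1} for the right-to-left one, since each lemma builds a model of only the opposite kind from a given one.
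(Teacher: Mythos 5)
Your proposal is correct and takes essentially the same route as the paper, whose proof of Theorem~\ref{SemanticEmb} consists of the single line that it ``follows from Lemmas~\ref{SemanticLemma1} and~\ref{SemanticLemma2}.'' You merely make explicit the contrapositive bookkeeping the paper leaves implicit, correctly pairing Lemma~\ref{SemanticLemma2} with the left-to-right implication and Lemma~\ref{SemanticLemma1} with the right-to-left one.
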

\begin{proof}
Follows from Lemmas \ref{SemanticLemma1} and \ref{SemanticLemma2}.
\end{proof}
\begin{thm}[Completeness]\label{Compl}
For any formula $ C $, $ \models_{\bf N4}C $ iff $ \vdash_{\bf N4}C $; $ \models_{\bf N4^{NF}}C $ iff $ \vdash_{\bf N4^{NF}}C $.
\end{thm}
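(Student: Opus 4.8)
The plan is to obtain completeness of Nelson's logics not by a direct canonical-model construction, but by transferring the known completeness of intuitionistic logic across the two embedding results already in hand. The observation is that Theorems \ref{SyntacticEmb} and \ref{SemanticEmb} together assert that the map $\tau$ is simultaneously a faithful syntactic and semantic translation of $\bf N4$ (resp.\ $\bf N4^{NF}$) into $\bf Int$ (resp.\ $\bf Int^{NF}$). If intuitionistic completeness is granted on the target side, then completeness on the source side is forced by composing the three biconditionals.

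First I would record the completeness of intuitionistic first-order logic with identity, and of its negative free variant, with respect to the Kripke semantics of Definition \ref{IntSemantic}; this is the standard result following Priest's presentation \cite{Priest}. Here one must check that the target language is harmless: the embedding $\tau$ lands in $L^\bot$ augmented with the fresh predicate copies $\{P^\prime\}$, but these copies are treated throughout as ordinary predicate letters, so the intuitionistic completeness theorem applies verbatim to the enlarged signature. With that in place, for an arbitrary formula $C$ I would run the chain
$$\models_{\bf N4} C \ \Longleftrightarrow\ \models_{\bf Int}\tau(C) \ \Longleftrightarrow\ \vdash_{\bf Int}\tau(C) \ \Longleftrightarrow\ \vdash_{\bf N4}C,$$
where the outer equivalences are Theorem \ref{SemanticEmb} and Theorem \ref{SyntacticEmb} respectively, and the middle equivalence is intuitionistic completeness. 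The identical chain with the negative free versions of both embedding theorems, together with completeness of $\bf Int^{NF}$, yields the second claim, $\models_{\bf N4^{NF}}C$ iff $\vdash_{\bf N4^{NF}}C$.

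The main obstacle I anticipate is not the logical core of the argument, which is a routine composition, but rather securing the inputs cleanly. Specifically, one needs intuitionistic completeness in exactly the form used here: over the extended signature with the $P^\prime$ copies, and in the negative free setting for $\bf Int^{NF}$, whose semantics over possibly partial domains and the existence predicate $\Ex$ must be matched to the natural deduction rules. I would therefore be careful to confirm that Priest's construction (or the cited embedding sources \cite{Gurevich,Rautenberg,Vorobev}) genuinely covers the free variant and identity, rather than only the propositional or non-free first-order fragment; any gap there is where a direct canonical model for $\bf Int^{NF}$ would have to be supplied before the transfer can go through.
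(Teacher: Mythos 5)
Your proposal matches the paper's proof exactly: the paper also derives completeness of $\bf N4$ and $\bf N4^{NF}$ by composing Theorem \ref{SemanticEmb}, the completeness of intuitionistic first-order logic with identity (and its negative free version), and Theorem \ref{SyntacticEmb}. Your additional care about the enlarged signature with the $P^\prime$ copies and the free-logic variant is a sensible precision that the paper leaves implicit.
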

\begin{proof}
Follows from Theorems \ref{SyntacticEmb} and \ref{SemanticEmb} as well as completeness of intuitionistic first-order logics with identity and its negative free version.
\end{proof}
\begin{lemma}\label{SoundI}
All the rules for $ \I $ and $ \neg\I $ are sound.
\end{lemma}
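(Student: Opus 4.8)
The plan is to verify soundness rule-by-rule: for each introduction and elimination rule governing $ \I $ and $ \neg\I $ (in both the ordinary and free versions), I assume that the premises hold at an arbitrary world $ w $ of an arbitrary Nelsonian (negative free) structure $ \mathfrak{N} $ under a paradefinite valuation $ \Vdash^N $, and show that the conclusion also holds at $ w $, using the semantic clauses for $ \I x[F,G] $ and $ \neg\I x[F,G] $ from Definition \ref{NelsonSemantic}. Since the semantic clauses were extracted from the equivalences $ \I x[F,G] \dashv\vdash \exists x(F \wedge \forall y(F^x_y\rightarrow y=x)\wedge G) $ and $ \neg\I x[F,G] \dashv\vdash \Ex y,\forall x(\neg F \vee \exists y (F^x_y \wedge \neg y=x)\vee \neg G) $, the most economical route is to unfold the abbreviations, treat each rule for $ \I $ as the corresponding derived rule for its quantifier-definition, and then invoke the already-understood clauses for $ \exists $, $ \forall $, $ \wedge $, $ \vee $, $ = $, and their negations.

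First I would handle the positive rules. For $ (\I I) $ and $ (\I I^\prime) $, I assume the premises $ F^x_t $, $ G^x_t $, $ \Ex t $ (in the free case) hold at $ w $, together with the uniqueness subderivation yielding $ y = t $ from $ F^x_y $ (and $ \Ex y $); I then read off the three conjuncts of the semantic clause for $ \I x[F,G] $, taking the witness $ d $ to be the denotation of $ t $. The persistence conditions on $ \varphi_w $ and on $ D_w $ under $ R $ guarantee that the universally-quantified uniqueness clause, which ranges over all $ R $-successors $ w^\prime, w^{\prime\prime} $, is met, because the subderivation establishes $ y = t $ hypothetically for an arbitrary $ y $ satisfying $ F $. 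For the elimination rules $ (\I E_1) $, $ (\I E_1^\prime) $ and $ (\I E_2) $, $ (\I E_2^\prime) $, I assume $ \I x[F,G] $ holds at $ w $, unpack the existential witness and the uniqueness clause, and verify that the generalized-variable and the $ t_1 = t_2 $ conclusions follow; the eigenvariable provisos on $ y $ ensure the witness can be named without clashes.

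Then I would treat the negation rules analogously against the clause for $ \neg\I x[F,G] $, which is a universally-quantified disjunction $ \neg F \vee \exists y(F^x_y \wedge \neg y=x) \vee \neg G $. The introduction rules $ (\neg\I I_1) $, $ (\neg\I I_2) $, $ (\neg\I I_3) $, $ (\neg\I I_3^\prime) $ each supply one disjunct (respectively $ \neg F $, $ \neg G $, or the existential middle disjunct witnessed by the pair $ t_1, t_2 $ with $ \neg t_1 = t_2 $), so I verify that the remaining structure of the clause is discharged by the universal closure plus persistence. For the major elimination rules $ (\neg\I E) $ and $ (\neg\I E^\prime) $, I assume $ \neg\I x[F,G] $ holds at $ w $ and that $ C $ follows from each of the three disjunct-cases; then a case split according to which disjunct is realized at the relevant instance—matching the three minor premises of the rule—yields $ C $, with the eigenvariable conditions on $ y $ licensing the case where the existential disjunct holds.

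The main obstacle I expect is bookkeeping around the interplay of the intuitionistic-style $ R $-quantification in the uniqueness clause with the eigenvariable side conditions: specifically, showing that the hypothetical subderivation of $ y = t $ (quantifying over an arbitrary fresh $ y $) correctly discharges the nested $ \forall w^\prime \forall e \forall w^{\prime\prime} $ block of the semantic clause, and that persistence of predicate extensions along $ R $ is genuinely needed and available. Care is also required in the free versions to track the existence premises $ \Ex t $, $ \Ex y $ and to confirm, via the clause $ \varphi_w(\Ex) = J $ together with (PD)/($ \neg $PD)-style constraints built into the structure, that every witness invoked lies in the appropriate $ D_{w^\prime} $ or $ E_{w^\prime} $. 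Once these alignments are checked, each rule reduces to a routine verification, and the other cases are entirely parallel.
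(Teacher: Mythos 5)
Your overall strategy---a rule-by-rule verification against the semantic clauses of Definition \ref{NelsonSemantic}, using persistence along $R$ and the eigenvariable provisos---is the natural one and is presumably what the paper intends, since the paper's own ``proof'' is simply left to the reader. The cases you sketch for $(\I I)$, $(\I E_1)$, $(\I E_2)$, $(\neg\I E)$ and (under the eigenvariable reading of $y$) $(\neg\I I_1)$, $(\neg\I I_2)$ do go through essentially as you describe; for $(\I E_2)$ note that it is the reflexivity of $R$ that lets you instantiate the nested uniqueness block of the $\I$-clause at $w$ itself.

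There is, however, a genuine gap at $(\neg\I I_3)$ and $(\neg\I I_3^\prime)$, precisely where you write that the premises ``supply the existential middle disjunct witnessed by the pair $t_1, t_2$'' and that the rest ``is discharged by the universal closure plus persistence.'' The clause for $\neg\I x[F,G]$ demands, for \emph{every} $d\in D_{w^\prime}$ at every successor $w^\prime$, either $\neg F$ of $d$, or some $e$ with $F$ of $e$ \emph{and} the strong negation $\neg k_e=k_d$ true, or $\neg G$ of $d$. From the premises $F^x_{t_1}$, $F^x_{t_2}$, $\neg t_1=t_2$ you have the candidate witnesses $t_1,t_2$, but for an arbitrary $d$ you would need $\neg(t_1=d)$ or $\neg(t_2=d)$ to hold, and this does not follow: strong negation of identity is just some subset $\varphi_w(\neg =)\subseteq J^2$, with no symmetry or co-transitivity (apartness) condition imposed. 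Concretely, take a one-world Nelsonian structure with $J=\{a,b,c\}$, $\varphi_w(F)=\varphi_w(G)=J$, $\varphi_w(\neg F)=\varphi_w(\neg G)=\emptyset$, and $\varphi_w(\neg =)=\{\langle a,b\rangle,\langle b,a\rangle\}$. The premises of $(\neg\I I_3^\prime)$ hold with $t_1=k_a$, $t_2=k_b$, yet the clause for $\neg\I x[F,G]$ fails at $d=c$, since no $e$ satisfies $\neg k_e=k_c$. So this case of your verification cannot be completed as stated; with the semantics as given in the paper the rule is in fact unsound, and the lemma can only be rescued by strengthening the structures (e.g.\ requiring $\varphi_w(\neg =)$ to be a symmetric, co-transitive apartness-like relation) or weakening the rule. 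A proof attempt that treats this case as ``routine'' masks exactly the point where the argument breaks down.
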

\begin{proof}
Left for the reader.
\end{proof}
\begin{thm}[Completeness]
For any formula $ C $, it holds that $ \models_{\bf N4_\I}C $ iff $ \vdash_{\bf N4_\I}C $; $ \models_{\bf N4^{NF}_\I}C $ iff $ \vdash_{\bf N4^{NF}_\I}C $.
\end{thm}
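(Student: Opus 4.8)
The plan is to reduce this completeness claim for the systems with $\I$ to the already-established completeness for the $\I$-free systems (Theorem \ref{Compl}), exactly as the paper handles every other aspect of $\I$ — namely, via the interderivability of $\I x[F,G]$ with $\exists x(F \wedge \forall y(F^x_y\rightarrow y=x)\wedge G)$ and of $\neg\I x[F,G]$ with $\Ex y,\forall x(\neg F \vee \exists y (F^x_y \wedge \neg y=x)\vee \neg G)$, both of which are established earlier in Section \ref{ND}. The idea is that $\I$ is a conservative, definable addition: anything we can say with $\I$ we can already say with the quantifier combination, and the semantic clauses for $\I x[F,G]$ and $\neg\I x[F,G]$ in Definition \ref{NelsonSemantic} were precisely read off from those same equivalent formulas.

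First I would extend the embedding function $\tau$ of Definition \ref{tau} to the language $L^\neg_\I$ by setting $\tau(\I x[F,G]) = \exists x(\tau(F) \wedge \forall y(\tau(F^x_y)\rightarrow y=x)\wedge \tau(G))$ and, correspondingly, $\tau(\neg\I x[F,G])$ to be the $\tau$-image of $\Ex y \wedge \forall x(\neg F \vee \exists y (F^x_y \wedge \neg y=x)\vee \neg G)$; since the right-hand sides contain no occurrence of $\I$, $\tau$ remains well-defined by the same recursion. Next I would verify that the Syntactical Embedding (Theorem \ref{SyntacticEmb}) and the Semantic Embedding (Theorem \ref{SemanticEmb}) both extend to $\bf N4_\I$/$\bf Int_\I$ and $\bf N4^{NF}_\I$/$\bf Int^{NF}_\I$ under this extended $\tau$. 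For the syntactic direction this amounts to adding, to the induction on derivations, the cases where the last rule applied is one of the $\I$- or $\neg\I$-rules; here the interderivability results let me simulate each such rule by the corresponding quantifier-combination derivation on the $\bf Int$ side. For the semantic direction I would add the clauses for $\I x[F,G]$ and $\neg\I x[F,G]$ to the inductions in Lemmas \ref{SemanticLemma1} and \ref{SemanticLemma2}, which go through because the semantic conditions in Definitions \ref{IntSemantic} and \ref{NelsonSemantic} were defined to match the truth/falsity conditions of the defining formulas.

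With the two embedding theorems in hand for the $\I$-extended systems, the completeness statement follows by the identical chain of reasoning used in Theorem \ref{Compl}: $\models_{\bf N4_\I}C$ iff $\models_{\bf Int_\I}\tau(C)$ (semantic embedding) iff $\vdash_{\bf Int_\I}\tau(C)$ (completeness of the intuitionistic systems with $\I$, which in turn reduces to completeness of $\bf Int$ via interderivability) iff $\vdash_{\bf N4_\I}C$ (syntactic embedding), and likewise for the negative free pair. I would also invoke Lemma \ref{SoundI}, which records soundness of the $\I$ and $\neg\I$ rules, to cover the forward (soundness) half cleanly.

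The main obstacle I anticipate is not the overall architecture, which is routine once the embeddings are in place, but the bookkeeping in verifying the new induction cases — in particular checking that the variable-freshness provisos attached to the $\I$-rules (the conditions on $y$ being distinct from $x$, not free in $t$, and not free in undischarged assumptions) transfer correctly through $\tau$ to the provisos on the simulating $\exists$/$\forall$ derivations, and that the substitution commutes with $\tau$, i.e. $\tau(F^x_t)=(\tau F)^x_t$. The genuinely delicate point is the negative free case, where the defining formula for $\neg\I x[F,G]$ carries the extra existence premise $\Ex y$; I would need to confirm that the free-logic provisos and the use of $(\Ex)$-discharging rules line up so that the $\neg\I$-rules are faithfully mirrored, rather than merely approximately so.
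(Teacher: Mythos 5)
Your proposal is correct in outline and rests on the same three ingredients the paper cites --- Theorem \ref{Compl}, the interderivability results of Section \ref{ND}, and Lemma \ref{SoundI} --- but it routes the reduction through different machinery than the paper does. The paper's proof stays entirely inside Nelson's logic: since both $\I x[F,G] \dashv\vdash \exists x(F \wedge \forall y(F^x_y\rightarrow y=x)\wedge G)$ and the corresponding interderivability for $\neg\I x[F,G]$ are available (together these amount to a \emph{strong} equivalence, which is what licenses replacement under negation in $\bf N4$, where plain equivalence does not), one rewrites every $\I$-subformula of $C$ into its $\I$-free equivalent, applies the already-proved completeness of $\bf N4$/$\bf N4^{NF}$ to the resulting formula, and uses Lemma \ref{SoundI} to transfer validity back and forth; the embedding $\tau$ is never touched. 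You instead propose to extend $\tau$ itself to $L^\neg_\I$ and to re-run both embedding theorems with new inductive cases for the $\I$- and $\neg\I$-rules. That is a legitimate alternative, and if carried out it would actually deliver more than the theorem asks for --- note that the paper's conclusion explicitly lists ``an adaptation of the embedding function $\tau$ for the case $\I$'' as open future work, so you would be solving a problem the authors deliberately side-stepped. The cost is exactly the difficulty you flag at the end: in the negative free case the equivalence for $\neg\I x[F,G]$ holds only modulo the extra premise $\Ex y$ with $y$ free, so there is no evident \emph{closed, compositional} formula to serve as $\tau(\neg\I x[F,G])$; the paper's route needs only the two-way derivability facts, not a well-defined translation function, and so avoids this obstacle. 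In short: same skeleton, but your decomposition is heavier, proves a stronger intermediate result, and inherits an unresolved definitional issue that the paper's more direct argument does not face.
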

\begin{proof}
Follows from Theorem \ref{Compl} and the definition of $ \I $ (that is equivalences proved in Section \ref{ND}) as well as Lemma \ref{SoundI}.
\end{proof}
\section{Conclusion}\label{CON}
In this paper, we examined the behaviour of the binary quantifier $ \I $ in Nelson's first-order logic with identity and its negative free version, drawing inspiration from K\"{u}rbis's method of formalising definite descriptions using $ \I $ added to intuitionistic first-order logic with identity and its negative free version. 
The research described in this article can be continued as follows. As a first task for the future, we leave the problem of an adaptation of the embedding function $ \tau $ for the case $ \I $. As a second task, we can propose to find a proof of the normalisation theorem for the natural deduction systems formulated in this article. As a third task, to conduct a similar study, on the basis of \cite{KurbisPosFree21a}, where $ \I $ is characterised by different natural deduction rules and is studied on the basis of intuitionistic positive free logic. As a fourth task, carry out comparable research based on \textbf{N3} instead of \textbf{N4}, or a non-constructive tabular extension of \textbf{N4}/\textbf{N3} by Peirce's law (in the latter case, one can think about embedding such logics into classical first-order (free) logic).

\paragraph{Acknowledgments.} Special thanks go to Nils K\"{u}rbis for useful comments. The author is grateful for the reviewers for their valuable suggestions. This work was   
 funded by the European Union (ERC, ExtenDD, project number: 101054714). 
Views and opinions expressed are however those of the author(s) only and do not necessarily reflect those of the European Union or the European Research Council. Neither the European Union nor the granting authority can be held responsible for them.

\nocite{*}

\begin{thebibliography}{18}
\bibitem{AlmukdadNelson}
Almukdad, A., Nelson, D.: Constructible falsity and inexact predicates. The Journal of Symbolic Logic \textbf{49}(01), 231--233 (1984), doi:\href{https://doi.org/10.2307/2274105}{\texttt{10.2307/2274105}}.

\bibitem{Gurevich}
Gurevich, Y.: Intuitionistic logic with strong negation. Studia Logica \textbf{26}(1-2), 49--59 (1977), doi:\href{https://doi.org/10.1007/BF02121114}{\texttt{10.1007/BF02121114}}.

\bibitem{KamideShramko}
Kamide, N., Shramko, Y.: Embedding from multilattice logic into classical logic and vice versa. Journal of Logic and Computation \textbf{27}(5), 1549--1575 (2017), doi:\href{https://doi.org/10.1093/logcom/exw015}{\texttt{10.1093/logcom/exw015}}.


\bibitem{KurbisNegFree19a}
K\"{u}rbis, N.: A binary quantifier for definite descriptions in intuitionist negative free logic: natural deduction and normalisation. Bulletin of the Section of Logic \textbf{48}(2), 81--97 (2019), doi:\href{https://doi.org/10.18778/0138-0680.48.2.01}{\texttt{10.18778/0138-0680.48.2.01}}.

\bibitem{KurbisNegFree19b}
K\"{u}rbis, N.: Two Treatments of Definite Descriptions in Intuitionist Negative Free Logic. Bulletin of the Section of Logic \textbf{48}(4), 299--317 (2019), doi:\href{https://doi.org/10.18778/0138-0680.48.4.04}{\texttt{10.18778/0138-0680.48.4.04}}.

\bibitem{KurbisPosFree21a}
K\"{u}rbis, N.: Definite Descriptions in Intuitionist Positive Free Logic. Logic and Logical Philosophy \textbf{30}, 327--358 (2021), doi:\href{https://doi.org/10.12775/LLP.2020.024}{\texttt{10.12775/LLP.2020.024}}.

\bibitem{Nelson}
Nelson, D.: Constructible falsity. Journal of Symbolic Logic \textbf{14}(1), 16--26 (1949), doi:\href{https://doi.org/10.2307/2268973}{\texttt{10.2307/2268973}}.


\bibitem{Prawitz}
Prawitz, D.: Natural Deduction. Stockholm, G\"{o}teborg, Uppsala: Almqvist and Wiksell (1965).

\bibitem{Priest}
Priest, G.: An Introduction to Non-Classical Logic. From If to Is. 2nd ed. Cambridge University Press, Cambridge (2008), doi:\href{https://doi.org/10.1017/CBO9780511801174}{\texttt{10.1017/CBO9780511801174}}.

\bibitem{Rautenberg}
Rautenberg, W.: Klassische und nicht-klassische Aussagenlogik. Vieweg, Braunschweig, (1979).

\bibitem{Tennant78}
Tennant, N.: Natural Logic. Edinburgh: Edinburgh University Press (1978)


\bibitem{Tennant04}
Tennant, N.: A general theory of abstraction operators. The Philosophical
Quarterly \textbf{54}(214), 105--133 (2004), doi:\href{https://doi.org/10.1111/j.0031-8094.2004.00344.x}{\texttt{10.1111/j.0031-8094.2004.00344.x}}.

\bibitem{Thomason}
Thomason, R. H.: A semantical study of constructible falsity. Zeitschrift Für Mathematische Logik Und Grundlagen Der Mathematik \textbf{15}(16-18), 247--257 (1969), doi:\href{https://doi.org/10.1002/malq.19690151602}{\texttt{10.1002/malq.19690151602}}.

\bibitem{Vorobev}
Vorob'ev, N. N.: A constructive proposition (in Russian). Doklady Akademii Nauk SSSR \textbf{85}, 465--468 (1952)




\end{thebibliography}

\end{document}